\documentclass[twoside]{src/main/latex/main/LNGAI}

\usepackage{src/main/latex/main/LNGAImacro}

\usepackage{graphicx}
\graphicspath{{images/}}

\usepackage{amsmath}
\usepackage{amssymb}

\def\lastname{Ganguly, Mendez and Kampik}

\usepackage[utf8]{inputenc}
\usepackage{soul}\setuldepth{article}
\usepackage{centernot}
\usepackage{enumerate}
\usepackage{tikz}
\usepackage{subfig}
\usepackage{mathtools}
\usepackage{thmtools}
\usepackage{thm-restate}
\usepackage{orcidlink}
\usepackage[normalem]{ulem}

\usepackage{hyperref}

\usepackage{makecell}

\newcommand{\orcidIDlink}[1]{\href{https://orcid.org/#1}{\orcidlink{#1} \texttt{{https://orcid.org/#1}}}}

\definecolor{oceanblue}{rgb}{0, 0.4824, 0.6549}

\makeatletter
\renewcommand{\p@subfigure}{}
\makeatother

\def\Args{\ensuremath{\mathit{Args}}}
\def\Att{\ensuremath{\mathit{Att}}}
\def\Supp{\ensuremath{\mathit{Supp}}}
\def\is{\ensuremath{\tau}}
\def\fs{\ensuremath{\sigma}}

\def\path{\ensuremath{\mathcal{P}}}
\def\interval{\ensuremath{\mathbb{I}}}

\def\arga{\ensuremath{\mathsf{a}}}
\def\argb{\ensuremath{\mathsf{b}}}
\def\argc{\ensuremath{\mathsf{c}}}
\def\argd{\ensuremath{\mathsf{d}}}
\def\arge{\ensuremath{\mathsf{e}}}
\def\argf{\ensuremath{\mathsf{f}}}
\def\argx{\ensuremath{\mathsf{x}}}
\def\argy{\ensuremath{\mathsf{y}}}
\def\argz{\ensuremath{\mathsf{z}}}

\newcommand{\argnode}[3]{\mbox{\ensuremath{#1~(#2)\!:\!\mathbf{#3}}}}

\tikzset{
    noanode/.style={scale=0.55,dashed, circle, draw=black!60, minimum size=10mm, font=\bfseries},
    unanode/.style={scale=0.55,circle, draw=black!75, minimum size=10mm, font=\bfseries},
    xunanode/.style={scale=0.55,circle, draw=black!75, minimum size=10mm, font=\bfseries, line width=2.5pt},
    invnode/.style={scale=0.55,circle, draw=white!0, minimum size=0mm, font=\bfseries},
    anode/.style={scale=0.55,circle, draw=lightgray!60, minimum size=10mm, font=\bfseries},
    xanode/.style={scale=0.55, circle, fill=lightgray, draw, minimum size=10mm, font=\bfseries,line width=2.5pt},
    xnoanode/.style={scale=0.55, circle, dashed, draw, minimum size=10mm, font=\bfseries,line width=2.5pt},
    sxnoanode/.style={scale=0.55, circle, dotted, draw, minimum size=10mm, font=\bfseries, line width=0.85pt}
}

\usepackage{xifthen}
\usepackage{xargs}
\newcommandx{\scon}[4][3=, 4=]{  
    \relax
    \ifmmode
        \ifthenelse{\isempty {#3}}
        { {\ensuremath{#1\ \sim #2}} }
        { {\ensuremath{#1\ \sim_{\sigma,#3,#4} #2}} }
    \else
        \ifthenelse{\isempty {#3}}
        { {\ensuremath{#1\ \sim #2}} }
        { {\ensuremath{#1\ \sim_{\sigma,#3,#4} #2}} }
    \fi
}
\newcommandx{\nscon}[4][3=, 4=]{
    \relax
    \ifmmode
        \ifthenelse{\isempty {#3}}
        { {\ensuremath{#1\ \not\sim #2}} }
        { {\ensuremath{#1\ \not\sim_{\sigma,#3,#4} #2}} }
    \else
        \ifthenelse{\isempty {#3}}
        { {\ensuremath{#1\ \not\sim #2}} }
        { {\ensuremath{#1\ \not\sim_{\sigma,#3,#4} #2}} }
    \fi
}

\begin{document}

    \begin{frontmatter}              

        \title{Safety, Oscillation, Liveness, and Fairness in Quantitative Argumentation Dialogues}

        \author{Arunavo~Ganguly}\footnote{Arunavo~Ganguly~\orcidIDlink{0009-0004-9098-9904}, \href{mailto:aganguly@cs.umu.se}{\texttt{aganguly@cs.umu.se}}},
        \author{Julian~Alfredo~Mendez}\footnote{Julian~A.~Mendez~\orcidIDlink{0000-0002-7383-0529},  \href{mailto:julian.mendez@cs.umu.se}{\texttt{julian.mendez@cs.umu.se}}},
        \author{Timotheus~Kampik}\footnote{Timotheus~Kampik~\orcidIDlink{0000-0002-6458-2252}, \href{mailto:tkampik@cs.umu.se}{\texttt{tkampik@cs.umu.se}}}

        \address{Umeå University, Umeå, Sweden}

        \begin{abstract}
            We introduce the notions of \emph{safety}, \emph{liveness}, and \emph{fairness}, as commonly used in temporal reasoning and distributed systems, to quantitative (bipolar) argumentation dialogues where repeated inferences are drawn from argumentation graphs with weighted nodes.
            Between inferences, these graphs are updated.
            Safety and liveness capture whether arguments' final strengths always and eventually (respectively) attain a specific threshold of credibility.
            Fairness notions assess how the safety of arguments is distributed within a sequence of argumentation graphs.
            Additionally, we introduce the notion of \emph{oscillation} to capture the stability of topic arguments with respect to the threshold of credibility.
            We formally show how these notions are related and why providing general guarantees for these properties is analytically challenging.
        \end{abstract}

        \begin{keyword}
            formal argumentation,
            safety,
            oscillation,
            liveness,
            fairness.
        \end{keyword}
    \end{frontmatter}


    \section{Introduction}
    \label{sec:intro}
    Quantitative (bipolar) argumentation is a formal argumentation variant in which inferences are drawn from graphs whose nodes (\emph{arguments}) with weights (\emph{initial strengths}) are connected by \emph{support} and \emph{attack} relations.
    Argumentation \emph{semantics} then draw inferences from a \emph{quantitative bipolar argumentation graph} (QBAG) by updating arguments' initial strengths to \emph{final strengths}, considering  the QBAG's topology.
    QBAGs are of substantial interest to the formal argumentation community, notably because of their potential to complement machine learning approaches and help make them \emph{explainable}~\cite{Potyka0T23,DBLP:conf/ecai/0007PT23} and \emph{contestable}~\cite{yin2025contestabilityquantitativeargumentation}.
    QBAG applications are emerging that facilitate the explainability and reliability of large language model-based inference~\cite{pmlr-v284-zhu25a,DBLP:conf/aaai/FreedmanDG00T25,jin2026argoraorchestratedargumentationcausally}.
    In some applications, argumentation can be viewed as a dynamic process in which QBAGs are manipulated, and repeated inferences are drawn.
    Accordingly, it may be interesting to inquire about the dynamic properties of QBAG-based inference.

    This paper systematizes such inquiries by drawing inspiration from well-known notions of \emph{safety}, \emph{liveness}, and \emph{fairness}, which are prominent in various temporal reasoning approaches, such as linear temporal logic~\cite{Alpern-Schneider,DBLP:conf/popl/GabbayPSS80}, distributed processes~\cite{fairness_motivation}, and Petri nets~\cite{DBLP:journals/ipl/KindlerA99}\footnote{We do not formally relate to these notions of \emph{tracial control}, as we are interested in conceptually somewhat different questions, namely the dynamics of quantitative argument strengths.}.
    In addition, we introduce the notion of \emph{oscillation}, which aims to capture the stability of a topic argument.
    Collectively, we refer to these notions as \emph{SOLF}.

    Our notions of SOLF are introduced against the backdrop of a \emph{threshold of credibility}.
    Intuitively, arguments whose final strengths \emph{attain} the threshold are deemed credible, whereas arguments that fail to meet the threshold are not.
    Our SOLF notions aim to assess the credibility of a set of topic arguments over a dynamic process.
    This process is specified by a sequence of QBAGs (which we call a \emph{chain} or \emph{dialogue}). The \emph{topic arguments} are a subset of arguments occurring in all of the sequence's graphs.
    Note that we assume the threshold of credibility is use-case dependent, analogous to the interpretation of final strengths in QBAGs in general.

    Given this backdrop, our notions allow us to ask the following questions:
    \begin{description}
        \item[Safety:]
        Do all of the topic arguments always attain the threshold?
        This notion of safety can, for instance, help us determine whether obviously credible arguments always retain reasonable strengths in the dialogue, or whether the dialogue derails.
        This interpretation of safety aligns with the canonical Linear Temporal Logic (LTL) reading, namely that \emph{``bad things'' do not happen during execution},
        where ``bad things'' refers to the topic arguments falling below the threshold~\cite{Alpern-Schneider}.
        \item [Oscillation:]
        If the topic arguments are credible at the beginning, do they retain their credibility throughout the process?
        Oscillation describes the behavior of the topic arguments' final strengths throughout the dialogue, i.e., whether they remain stable or flip-flop across the threshold.
        Intuitively, this reflects that such arguments are subject to substantial debate.
        \item[Liveness:]
        Do all of the topic arguments eventually become credible, i.e., do they attain the threshold at the end of the dialogue?
        Again, this interpretation of liveness is aligned with the standard reading in LTL, namely \emph{``good things'' do happen $({\it eventually})$},
        where ``good things'' refers to the topic arguments attaining the threshold~\cite{Alpern-Schneider}.
        \item[Fairness:]
        Do all topic arguments get to have an equal chance to attain credibility?
        We present several discrete and gradual fairness notions, which are based on safety.
    \end{description}

    We informally demonstrate our intuitions in the following example.
    \begin{example}
        \label{ex:example-intro}
        \begin{figure}[ht]
            \centering
            \subfloat[Initial QBAG $G$.]{
                \centering
                \begin{tikzpicture}[node distance=1.5cm]
                    \node[unanode] (a) at (0, 0) {\argnode{\arga}{0.5}{0.6}};
                    \node[unanode] (b) at (2, 0) {\argnode{\argb}{0.7}{0.7}};
                    \node[unanode] (c) at (0, 2) {\argnode{\argc}{0.2}{0.2}};
                    \node[invnode] (i) at (3,-1) {\phantom{e}};
                    \node[invnode] (i) at (-0.5,-1) {\phantom{e}};

                    \draw[-stealth, thick] (c) -- node[pos=0.5, left=1pt] {+} (a);

                \end{tikzpicture}
                \label{fig:intro-initial}
            }
            \centering
            \subfloat[Normal expansion $G'$.]{
                \begin{tikzpicture}[node distance=1.5cm]
                    \node[unanode] (a) at (0, 0) {\argnode{\arga}{0.5}{0.1}};
                    \node[unanode] (b) at (2, 0) {\argnode{\argb}{0.7}{0.0}};
                    \node[unanode] (c) at (0, 2) {\argnode{\argc}{0.2}{0.2}};
                    \node[unanode] (d) at (2, 2) {\argnode{\argd}{1.0}{1.0}};
                    \node[invnode] (i) at (3,-1) {\phantom{e}};
                    \node[invnode] (i) at (-0.5,-1) {\phantom{e}};

                    \draw[-stealth, thick] (c) -- node[pos=0.5, left=1pt] {+} (a);
                    \draw[-stealth, thick] (d) -- node[pos=0.5, left=1pt] {-} (a);
                    \draw[-stealth, thick] (d) -- node[pos=0.5, left=1pt] {-} (b);
                \end{tikzpicture}
                \label{fig:intro-normal-expansion-1}
            }
            \centering
            \subfloat[Normal expansion $G''$.]{
                \begin{tikzpicture}[node distance=1.5cm]
                    \node[unanode] (a) at (0, 0) {\argnode{\arga}{0.5}{0.5}};
                    \node[unanode] (b) at (2, 0) {\argnode{\argb}{0.7}{0.56}};
                    \node[unanode] (c) at (0, 2) {\argnode{\argc}{0.2}{0.2}};
                    \node[unanode] (d) at (2, 2) {\argnode{\argd}{1.0}{0.2}};
                    \node[unanode] (e) at (1, 3.2) {\argnode{\arge}{0.8}{0.8}};
                    \node[invnode] (i) at (3,-1) {\phantom{e}};
                    \node[invnode] (i) at (-0.5,-1) {\phantom{e}};

                    \draw[-stealth, thick] (c) -- node[pos=0.5, left=1pt] {+} (a);
                    \draw[-stealth, thick] (d) -- node[pos=0.5, left=1pt] {-} (a);
                    \draw[-stealth, thick] (d) -- node[pos=0.5, left=1pt] {-} (b);
                    \draw[-stealth, thick] (e) --
                    node[pos=0.5, left=1pt] {-} (d);
                \end{tikzpicture}
                \label{fig:intro-normal-expansion-2}
            }
            \caption{From left to right, the initial QBAG expands by successively adding $\argd$ (normal expansion $G'$) and $\arge$ (normal expansion $G''$), respectively.
            Such scenarios capture the essence of \emph{dynamic argumentation}.
            }
            \label{fig:intro}
        \end{figure}
        Consider the QBAGs in Figure~\ref{fig:intro}.
        Each argument is labelled
        $\argx \ (\argy) : \argz$,
        denoting the name, initial strength, and final strength, respectively.
        Intuitively, initial strengths represent the credibility of arguments per se,
        whereas arguments' final strengths reflect the justification reached through argument interactions.
        We denote $\fs_{G}(\argx)$ as the final strength of $\argx$ in the QBAG $G$ and apply DFQuAD semantics, as defined in the preliminaries.
        Interactions are represented via attack and support relations, labelled $-$ and $+$, respectively.
        The QBAGs in Figure~\ref{fig:intro} portray a \emph{normal expansion chain}: new arguments and relationships are added, but relationships between existing arguments (and arguments' initial strengths) remain unchanged.

        Our SOLF notions are introduced with a \emph{threshold of credibility};
        credible arguments are assumed to have final strengths that attain this threshold.
        We understand safe argument sets to be credible throughout the process.
        For instance, $\{\argc\}$ is safe for the threshold $t=0.1$ in Figure~\ref{fig:intro}.
        Liveness stipulates that arguments always become credible eventually, i.e., at the end of the dialogue chain, their final strengths attain the threshold.
        $\{\argb\}$ is live w.r.t. the threshold of $0.1$, since $\fs_{G''}(\argb) \geq 0.1$.
        Extending this observation, $\{\argb\}$ also \emph{oscillates} w.r.t. $0.1$, since its final strength is below $0.1$ in $G'$, but attains it again in $G''$, thus crossing the threshold.
        Our notions of fairness then cover two dimensions: several binary classifications based on safety, and gradual notions based on the established measures of the Gini index and Shannon entropy.

        In the example, $\{\arga, \argb\}$ is \emph{cautiously fair}, which requires that if any of the topic arguments becomes credible (at the end), so does the whole set.
        Indeed, $\{\arga, \argb\}$ is cautiously fair because at least $\{\arga\}$ or $\{\argb\}$ is live.
        However, our gradual fairness notions provide more nuance: $\{\arga, \argb, \argc\}$ yields fairness scores of $\approx 0.98$
        and $\approx 0.46$ 
        for the \emph{Shannon-based} and \emph{Gini-based} notions (defined in Section~\ref{subsec:Fairness}), respectively.
        Intuitively, this occurs because $\{\argc\}$ has a better chance at justification than $\{\arga\}$ or $\{\argb\}$, indicating imperfect uncertainty (Shannon entropy) and inequality of justification distribution among arguments (Gini index).
    \end{example}
    \newpage
    In the following sections, we introduce relevant preliminaries (Section~\ref{sec:Prelims}), to then formally present our SOLF notions, observing some of their properties, as well as how the notions are interrelated (Section~\ref{sec:Safe-Live-Fair}).
    We also discuss our results in the context of related research (Section~\ref{sec:discussion-and-conclusion}) and sketch why giving guarantees w.r.t. the SOLF notions may be challenging using an empirical demonstration of semantics behavior, thus highlighting directions for future work (Section~\ref{sec:future-work}).

    \section{Preliminaries}
    \label{sec:Prelims}

    Let $\langle \mathbb{I},\preceq \rangle$ be a partial order.
    Typically, $\mathbb{I}$ will be a real interval alongside the canonical order (e.g., $\mathbb{I} = [0, 1]$).
    A \emph{quantitative bipolar argumentation graph} (QBAG) represents arguments with \emph{initial strengths} in $\mathbb{I}$, and support and attack relations between these arguments.
    Intuitively, an initial strength may represent
    the initial confidence in the argument without any influence; greater initial strengths represent higher confidence.
    \begin{definition} [QBAG]
        A \textit{Quantitative Bipolar Argumentation Graph} (QBAG) is a quadruple $\langle \Args, \tau, \Att, \Supp \rangle$, where $\Args$ is a non-empty finite set of elements called \emph{arguments}, $\Att, \Supp \subseteq \Args \times \Args$ are the \emph{attack} and the \emph{support} relations, respectively, over the arguments, with $\Att \cap \Supp=\varnothing$, and $\tau: \Args \rightarrow \mathbb{I}$ is a total function assigning an \emph{initial strength} $\tau(\argx)$ to each argument $\argx \in \Args$.
    \end{definition}
    Figure~\ref{fig:intro} features some QBAGs that are explained in Example~\ref{ex:example-intro}.
    Unless stated otherwise, let
    $G = \langle \Args, \tau, \Att, \Supp \rangle$ be an arbitrary but fixed QBAG, with finite $\Args$.
    Given an argument $\argx\in  \Args$, the attackers and supporters of $\argx$ are denoted by
    $$
    \Att_G(\argx):= \{\arga\in \Args: \langle \arga, \argx \rangle \in \Att\},$$
    $$ \Supp_G(\argx):= \{\arga\in \Args: \langle \arga,\argx \rangle \in \Supp\},
    $$
    respectively.
    We drop the subscript $G$ whenever it is clear from context.
    Given two arguments $\argx, \argy \in \Args$, we say that $\argx$ \emph{reaches} $\argy$ if and only if there is a path from $\argx$ to $\argy$ in the graph $\langle \Args, \Att \cup \Supp \rangle$.
    An \emph{acyclic} QBAG does not contain any loops, i.e., no argument $\argx\in \Args$ can reach itself.
    For any $\Args' \subseteq \Args$, we define the \emph{restriction of} $G$ \emph{to} $\Args'$ as \\
    $
    G\downarrow_{\Args'}:=\langle \Args', \tau \cap (\Args' \times \mathbb{I}), \Att \cap (\Args' \times \Args'), \Supp \cap (\Args' \times \Args') \rangle.
    $

    As we are interested in dynamic argumentation, we require notions that relate several QBAGs.
    \begin{definition} [Sub-QBAG]
        We say that $G'=\langle \Args', \tau', \Att', \Supp' \rangle$ is a \emph{sub-QBAG of} $G$, denoted by  $G'\sqsubseteq G$, if and only if $\Args'\subseteq \Args$, $\Att'\subseteq \Att$, $\Supp'\subseteq \Supp$, and $\tau'(\argx)=\tau(\argx)$, for every $\argx \in \Args'$.
        If $G' \sqsubseteq G$ but $G' \not = G$, we may denote this by $G' \sqsubset G$.
    \end{definition}

    Our idea of argumentation dialogues is captured via QBAG chains. Intuitively, chains represent evolutions of a given QBAG along arbitrary dimensions, such as expansions and initial strength updates.
    \begin{definition}[QBAG Chains]\label{def:chains}
        A \emph{QBAG chain} (short: \emph{chain}) is a sequence
        of QBAGs ${\cal G} = (G_1, \dots, G_n)$.
    \end{definition}
     Given a chain ${\cal G} = (G_1, \dots G_n)$,
    when we write $G \in \cal G$, we mean that for some $G_i$, $1 \leq i \leq n$, it holds that $G_i = G$.
    We may call $G_1$ and $G_n$ the \emph{start point} and \emph{end point} of $\cal G$, respectively.

    The notion of QBAG chains capture a broad range of QBAG modifications: attack and support additions/removals, strength change updates and expansions.
    If a chain only encompasses expansions, i.e., the addition of new arguments to the previous QBAG in the chain, we call it a \emph{QBAG expansion chain}.
    \begin{definition}[QBAG Expansion Chains]
        A \emph{QBAG expansion chain} (or simply \emph{expansion chain}) is a chain $(G_1, \dots, G_n)$, with $G_i \sqsubset G_{i+1}$ for $1 \leq i < n$.
        An expansion chain $\cal G$ is called:
        \begin{enumerate}
            \item [\emph{$(i)$}]
            \emph{normal}
            if and only if for every $G_i\! = \langle \Args, \is, \Att, \Supp \rangle$ and $G_j = \langle \Args', \is', \Att', \Supp' \rangle,$ with $G_i, G_j\in \cal G$ and $G_i\sqsubset G_j$ it holds that
            $\forall \langle \argx , \argy \rangle \in (\Att' \cup \Supp')\setminus( \Att \cup \Supp)$, either $\argx$ or $\argy$ is in $\Args' \setminus \Args$;
            \item [\emph{$(ii)$}]
            \emph{weak} if and only if $\cal G$ is normal and for every $G_i\! = \langle \Args, \is, \Att, \Supp \rangle$ and $G_j = \langle \Args', \is', \Att', \Supp' \rangle,$ with $G_i, G_j\in \cal G$ and $G_i\sqsubset G_j$ it holds that
            $\forall \langle \argx , \argy \rangle \in (\Att' \cup \Supp')\setminus( \Att \cup \Supp): \argy \notin \Args$.
        \end{enumerate}
    \end{definition}
    \begin{example}
        \label{ex:expansion-set-example}
        Referring to Figure~$\ref{fig:intro}$, ${\cal G} = (G, G', G'')$ is a (\emph{normal} but not \emph{weak}) expansion chain, where $G$ and $G''$ are the start and end points of $\cal G$, respectively.
    \end{example}
    Note that while the examples provided in this paper focus on expansion chains, our SOLF notions are applicable to arbitrary chains.

    \begin{definition}[Gradual semantics]
        A \emph{gradual semantics} is a function $\fs_G: \Args \rightarrow \mathbb{I} \cup \{\perp\}$ mapping the arguments in a QBAG $G$ to the partial order $\mathbb{I}$, along with the symbol $\perp$ for undefined~\cite{baroni-toni-rago}.
    \end{definition}
    \emph{Modular semantics} are well-studied sub-class of gradual semantics, where the final strength of an argument is assigned by \emph{aggregating} the (final)
    strengths of its attackers and supporters, and then computing the \emph{influence} on the argument given the aggregation result and the argument's initial strength.
    More formally, an \emph{aggregation function} has the form $f(A, S)$, given multi-sets of attack values $A$ and support values $S$.
    The application of an influence function $g_s(f(A, S))$ then yields an update of an argument's strength value $s \in \interval$.
    Modular semantics first \emph{initialize} all arguments by assigning their initial strengths as the temporary strength value, and then update these values by applying aggregation and influence function to these values following the topological order of the QBAG (in the direction of the binary relations).

    In our examples, we make use of DFQuAD semantics~\cite{Rago.al-2016}, given $\interval = [0, 1]$.
    \begin{definition}[DFQuAD]
        Let $A = \Att(\argx)$ and $S = \Supp(\argx)$, for $\argx \in \Args$.
        For the aggregation function
        \[
            f(A,S)=\prod_{\argb \in A}(1-\fs(\argb))-\prod_{\argc \in S}(1-\fs(\argc))\]
        and initial strength $\tau(\arga)$,
        DFQuAD is defined over acyclic graphs as per the equation given below:
        \begin{align*}
            g(\tau(\arga), f(A, S)) := \ & \tau(\arga)-\tau(\arga)\cdot \max\{0,-f(A, S)\} \ +
            \\
            & (1-\tau(\arga))\cdot \max\{0,f(A, S)\}.
        \end{align*}
    \end{definition}

    \noindent \textbf{Assumption.} Notice that for cyclic QBAGs, modular semantics may not converge and thus yield undefined final strengths in some cases~\cite{potyka2024empirical}.
    Because of this, we focus on acyclic QBAGs, which is a common assumption in the literature~\cite{kotonya2019gradual,cocarascu2019extracting,lidecision,chi2021optimized}, i.e., we assume that final strengths are comparable.


    \section{Safety, Oscillation, Liveness, and Fairness Properties}
    \label{sec:Safe-Live-Fair}

    In this section, we formally introduce our SOLF notions for QBAGs.
    The notions are validated w.r.t. an initial QBAG $G$, a chain of updates $\cal G$, a semantics $\sigma$, and a threshold $t\in \mathbb{I}$, which we call the \emph{threshold of credibility}.
    Throughout this section, we fix a partial order $\mathbb{I}$, arbitrary QBAGs $G=\langle \Args, \tau, \Att,\Supp \rangle$,
    $G'=\langle \Args', \tau', \Att',\Supp'\rangle$,
    $G''=\langle \Args'',\tau'',\Att'',\Supp''\rangle$, and a QBAG chain ${\cal G}=(G_1, \dots, G_n)$.
    We also fix a non-empty finite set of topic arguments $T\subseteq\bigcap_{1\leq i\leq n}\Args_{G_i}$, i.e., the arguments occurring in the entire chain, to analyze the SOLF notions.


    \subsubsection*{Safety and Liveness}
    \label{sec:safety}
    Intuitively, safe arguments remain credible throughout the whole span of the dialogue, i.e., their final strengths remain above the threshold of credibility when inferences are repeatedly drawn.
    \begin{definition}[Safety]
        \label{def:safety}
        We say that a topic set $T$ is safe w.r.t. $\cal G$, $\fs$, and $t$ if and only if for every $G\in \cal G$, and $\argx\in T$, it holds that $\fs_{G}(\argx)\geq t$.
    \end{definition}
    When we discuss whether a topic set is safe w.r.t. some chain $\cal G$, semantics $\fs$, and threshold $t$, we may leave the specification of any of $\cal G$, $\fs$, or $t$  implicit. For other SOLF notions, we proceed analogously; notably, in our examples, we generally assume DFQuAD semantics.

    When assessing liveness, we are interested only in the credibility of the topic arguments at the end of the dialogue, regardless of their credibility throughout the process.
    \begin{definition}[Liveness]
        \label{def:weak-safety}
        We say that a topic set $T$ is live w.r.t. $\cal G$, $\fs$, and $t$ if and only if  $T$ is safe w.r.t. the (singleton) QBAG chain $(G_n)$, $\fs$, and  $t$.
    \end{definition}

    We revisit our initial example to illustrate the safety and liveness notions.
    \begin{example}
        \label{ex:safety-example}
        In Figure~\ref{fig:intro}, the arguments $\{\argc\}$ and $\{\argb\}$ are safe and live, respectively, w.r.t. $(G, G', G'')$ and the threshold $t = 0.1$.
        However, if the threshold of credibility is changed to $t = 0.3$, $\{\argc\}$ is neither safe nor live, whereas $\{\argb\}$ continues to be live.
    \end{example}

    Since safety concerns the credibility of topic arguments over the whole process, it is immediate that safety implies liveness.
    \begin{theorem}
        \label{thm:strong-implies-weak-safety}
        If a topic set $T$ is safe w.r.t. $\cal G$, $\fs$, and $t$, then it is live w.r.t. $\cal G$, $\fs$, and $t$.
    \end{theorem}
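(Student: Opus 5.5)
The plan is to unfold the two definitions and observe that liveness is safety restricted to a single element of the chain. By Definition~\ref{def:weak-safety}, $T$ is live w.r.t. $\cal G$, $\fs$, and $t$ exactly when $T$ is safe w.r.t. the singleton chain $(G_n)$, $\fs$, and $t$. By Definition~\ref{def:safety}, that in turn means: for every $G \in (G_n)$ and every $\argx \in T$, $\fs_{G}(\argx) \geq t$. The only QBAG in the singleton chain is $G_n$, so liveness reduces to the claim that $\fs_{G_n}(\argx) \geq t$ for all $\argx \in T$.

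Now assume $T$ is safe w.r.t. $\cal G$, $\fs$, and $t$. Then $\fs_{G}(\argx) \geq t$ holds for every $G \in \cal G$ and every $\argx \in T$. Since $G_n$ is the end point of ${\cal G} = (G_1, \dots, G_n)$, we have $G_n \in \cal G$ by the paper's convention for membership in a chain. Instantiating the safety condition at $G = G_n$ gives $\fs_{G_n}(\argx) \geq t$ for all $\argx \in T$, which is exactly the reduced liveness condition.

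I do not expect a real obstacle; the argument is a monotonicity-under-restriction observation. The only points to check are bookkeeping. First, $G_n$ must exist, which holds because a chain is a sequence $(G_1,\dots,G_n)$ and the end point is defined. Second, $T$ must be well-defined as a topic set for $(G_n)$: since $T \subseteq \bigcap_{i} \Args_{G_i} \subseteq \Args_{G_n}$, it is a legitimate topic set for the singleton chain. The comparison $\geq$ is the same partial order on $\mathbb{I}$ in both definitions, so no property of $\fs$ is needed. In particular, the acyclicity assumption only ensures that the final strengths are defined and comparable.
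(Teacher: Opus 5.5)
Your proposal is correct and follows essentially the same route as the paper: you unfold liveness as safety w.r.t.\ the singleton chain $(G_n)$ and instantiate the universally quantified safety condition at the end point $G_n \in \mathcal{G}$. Your extra bookkeeping (that $G_n$ exists and that $T \subseteq \Args_{G_n}$ is a legitimate topic set for $(G_n)$) is a harmless refinement that the paper leaves implicit.
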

    \begin{proof}
        Let $T\subseteq \Args$ be safe w.r.t. ${\cal G} = (G_1, \dots ,G_n)$, $\fs$, and $t$.
        This means $\forall G \in \cal G$, $\forall \argx \in T$ it holds that $\fs_{G}(\argx) \geq t$, and thus also $\fs_{G_n}(\argx)\geq t$.
        The previous statement is the condition for liveness, as $G_n$ is the end point of the chain.
    \end{proof}
    In general, the converse of Theorem~\ref{thm:strong-implies-weak-safety} does not hold, as shown in Example~\ref{ex:safety-example}.
    However, under certain restrictions on the expansion sets and the choice of semantics, a partial converse can be achieved.
    In the subsequent theorem, we propose two of such cases.
    \begin{theorem}
        \label{thm:weak-strong-safety-coincide}
        Suppose one of the following cases holds:
        \begin{enumerate}
            \item
            The chain $\cal G$ has length $1$, or
            \item
            $\cal G$ is a weak expansion chain and $\fs$ is a modular semantics.
        \end{enumerate}
        Then safe and live arguments coincide w.r.t. $\cal G$, $\sigma$, and $t$.
    \end{theorem}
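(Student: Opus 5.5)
By Theorem~\ref{thm:strong-implies-weak-safety}, safety always implies liveness, so in both cases only the converse needs proof: if $T$ is live w.r.t. $\cal G$, $\fs$, and $t$, then $T$ is safe.

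\emph{Case 1.} If $\cal G = (G_1)$ has length $1$, then $G_n = G_1$ and the chain $(G_n)$ is $\cal G$ itself. Liveness is defined as safety w.r.t. $(G_n)$, so it is literally safety w.r.t. $\cal G$. Nothing further is needed.

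\emph{Case 2.} Let $\cal G$ be a weak expansion chain and $\fs$ a modular semantics. The key step is the invariance claim: for every $i \le j$ and every $\argx \in \Args_{G_i}$, we have $\fs_{G_i}(\argx) = \fs_{G_j}(\argx)$. Since $T \subseteq \Args_{G_i}$ for all $i$, liveness ($\fs_{G_n}(\argx) \geq t$ for all $\argx \in T$) then transfers to every $G_i$, giving safety.

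To prove the claim, it suffices to treat $G_i \sqsubset G_j$ directly, since weakness is stated for all pairs in the chain. First I show that no argument of $G_j$ outside $\Args_{G_i}$ reaches an argument of $\Args_{G_i}$ in $G_j$. Weakness says every new edge $\langle \argx, \argy \rangle$ has $\argy \notin \Args_{G_i}$. Old edges come from $G_i$ and hence have both endpoints in $\Args_{G_i}$. So every edge entering an argument of $\Args_{G_i}$ is an old edge from an old argument, and by induction along a path, no path from a new argument ever enters $\Args_{G_i}$. Consequently, for $\argx \in \Args_{G_i}$, the sets $\Att_{G_j}(\argx)$ and $\Supp_{G_j}(\argx)$ coincide with $\Att_{G_i}(\argx)$ and $\Supp_{G_i}(\argx)$. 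Moreover, $\tau_{G_j}(\argx) = \tau_{G_i}(\argx)$ by the definition of sub-QBAG.

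Next, using acyclicity, I induct along a topological order of $G_i$. For a source $\argx$, both semantics return $g_{\tau(\argx)}(f(\varnothing,\varnothing))$ from the same initial strength. For a general $\argx$, its attackers and supporters are identical in both graphs and precede $\argx$ in the order, so by the induction hypothesis they have equal final strengths. The aggregation $f$ therefore receives equal multisets in $G_i$ and $G_j$, and the influence $g$ uses the same initial strength, so the final strengths of $\argx$ agree.

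The main obstacle is making this step rigorous given the paper's informal description of modular semantics. What is needed is that the final strength of an argument in an acyclic QBAG is determined solely by its initial strength and the final strengths of its direct attackers and supporters, via fixed $f$ and $g$, independently of the ambient graph. I would state this as the working reading of ``modular semantics'' (computation in topological order) and note that the new arguments of $G_j$ may be interleaved anywhere in the topological order of $G_j$ without affecting the induction. This holds because the order restricted to $\Args_{G_i}$ is a valid topological order of $G_i$, and new arguments never feed into old ones.
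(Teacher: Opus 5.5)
Your proof is correct and follows the paper's route. Case 1 is immediate because the chain coincides with $(G_n)$, and Case 2 rests on the invariance of old arguments' final strengths under weak expansions with a modular semantics; the paper simply cites this invariance as the strong directionality property of Kampik and Potyka (Principle 4, Lemma 6.1), whereas you reprove it directly by showing that new edges only target new arguments and inducting along a topological order.
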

    \begin{proof}
        Assume $T$ is live w.r.t. $\cal G$, $\fs$, and $t$, where the chain $\cal G$ is denoted by $(G_1,\dots ,G_n)$.
        Now, we prove our claim case by case.
        \begin{description}
            \item[Case (i):]
            If $\cal G$ has length $1$, then ${\cal G}=(G_{1})$, and hence $G_1=G_n$.
            Since the start and end point of $\cal G$ coincide, by Definition~\ref{def:weak-safety}, $\fs_{G_1}(\argx) \geq t$, for every $\argx \in T$.
            Since $\cal G$ is a singleton, $\forall G \in \cal G$, $\fs_{G}(\argx)\geq t$, which is the condition for safety.
            \item[Case (ii):]
            By the strong directionality argument~\cite{kampik-potyka} (Principle 4, Lemma 6.1), weak expansions under any modular semantics $\fs$ do not alter the final strengths of arguments in $G$.
            Thus, $\forall G, G'\in \cal G$ and $\argx \in T$, $\fs_G(\argx) = \fs_{G'}(\argx)$.
            Accordingly, it also holds $\forall \argx \in T$ that $\fs_{G_n}(\argx) \geq t$ if and only if  $\forall G \in {\cal G}: \fs_{G}(\argx) \geq t$.
        \end{description}
    \end{proof}
%


    \subsubsection*{Oscillation}
    \label{sec:liveness}
    Safety and liveness concern the behavior of topic arguments at or above the threshold of credibility, i.e., whether the arguments remain consistently credible, or eventually attain credibility at the end.
    However, these notions do not capture cases where arguments remain relatively stable, only to lose credibility at a few specific instances during (or at the end of) the dialogue.
    For instance, an argument may remain consistently above the threshold of credibility throughout the expansion chain, but drop below the threshold at the end point.
    Such an argument is then not live and hence not safe. However, it is still worthwhile to investigate the threshold-crossing behavior of such arguments.

    To study these cases, we introduce the notion of \emph{oscillation}.
    This notion aims to answer whether or not an argument's credibility remains stable over the whole expansion chain.
    The idea is formally introduced through fluctuations across the threshold of credibility.
    Intuitively, fluctuations quantify the ``flip-flopping'' behavior exhibited by an argument.
    Formally, we say that $\argx \in T$ \emph{moves across a threshold} $t$ w.r.t. $\fs$ and QBAGs $G$, $G'$ if and only if either $\fs_G(\argx) < t \leq \fs_{G'}(\argx)$, or $\fs_{G'}(\argx) < t \leq \fs_G(\argx)$ holds.
    This gives us a key prerequisite to define $k$-fluctuations as follows.
    \begin{definition}[$k$-Fluctuations]
        \label{def:fluctuations}
        We say that $\argx\in T$ shows $k$-\emph{fluctuations} w.r.t. ${\cal G}=(G_1,\dots ,G_n)$, $\fs$, and $t$ iff
        there exist natural numbers $1< i_1< \dots <i_k < n$ such that $\argx$ moves across the threshold $t$ w.r.t. $\fs$ and every $G_{i_j}, G_{i_{j+1}} \in \cal G$.
        We say $\argx$ shows \emph{exactly} $k$-\emph{fluctuations} if and only if $\argx$ shows at least $k$-fluctuations and does not show $k+1$-fluctuations.
    \end{definition}
    Intuitively, fluctuations are determined relative to the threshold of credibility, which must be (at least) met for threshold attainment.
    If an argument's final strength is equal to the threshold and attains it at the next QBAG in the chain, no fluctuations are counted.
    Similarly, no fluctuations are counted if an argument's final strength is above the threshold of credibility and becomes equal to the threshold at the next QBAG in the chain.
    We apply $k$-fluctuations to topic sets, which gives rise to the notion of \emph{oscillation}.
    \begin{definition}[Oscillation]
        \label{def:liveness}
        We say that a topic set $T$ \emph{oscillates} w.r.t. $\cal G$, $\sigma$, and $t$ if and only if each $\argx \in T$ shows at least $1$-fluctuation w.r.t. $\cal G$, $\fs$, and $t$.
        We say that a topic set is (at least / exactly) $n$-\emph{oscillating} if each of its arguments shows (at least / exactly) $n$-fluctuations.
    \end{definition}
    Let us illustrate these definitions with the help of an example.
    \begin{example}
        \label{ex:liveness-example}
        In Figure~\ref{fig:intro}, the chain ${\cal G} = (G, G', G'')$ has the topic set $\{\arga, \argb\}$ as $2$-oscillating w.r.t. the threshold $0.2$, which means that $\{\arga\}$ and $\{\argb\}$ cross the threshold at least twice.
        On the other hand, $\{\argc\}$ shows no fluctuation w.r.t. $t = 0.2$; accordingly, it is safe and not oscillating.
    \end{example}
    Intuitively, oscillating topic sets have the \emph{potential} to be live and safe, but in practice, they may \emph{actually} be live but \emph{never} safe.
    \begin{theorem}
        \label{thm:safety-characterization-using-liveness}
        If a topic set $T$ is safe w.r.t. $\cal G$, $\fs$, and $t$ then it does not oscillate w.r.t. $\cal G$, $\fs$, and $t$.
    \end{theorem}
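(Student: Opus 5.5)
The proof goes by contraposition, and in fact directly: safety rules out every threshold crossing, so no argument can show even a single fluctuation. Assume $T$ is safe w.r.t. ${\cal G}=(G_1,\dots,G_n)$, $\fs$, and $t$. By Definition~\ref{def:safety}, for every $G_i\in\cal G$ and every $\argx\in T$ we have $\fs_{G_i}(\argx)\geq t$.

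The key step is to show that no $\argx\in T$ moves across $t$ w.r.t. $\fs$ and any pair $G_i, G_j\in\cal G$. Moving across $t$ requires one of the two strict inequalities $\fs_{G_i}(\argx)<t$ or $\fs_{G_j}(\argx)<t$. Each of these contradicts safety: since $\fs_{G_i}(\argx)\geq t$, antisymmetry of the partial order $\preceq$ on $\mathbb{I}$ excludes $\fs_{G_i}(\argx) < t$. Here $<$ is read as $\preceq$ together with $\neq$, and the same argument applies to $G_j$.

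Consequently no $\argx\in T$ shows $1$-fluctuations in the sense of Definition~\ref{def:fluctuations}. Showing $1$-fluctuations would require indices $i_j$ at which $\argx$ moves across $t$, and no such index exists. In particular no argument shows at least $1$-fluctuation. Because $T$ is fixed non-empty, picking any $\argx\in T$ then shows that the requirement of Definition~\ref{def:liveness} (each argument shows at least $1$-fluctuation) fails, so $T$ does not oscillate.

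I expect the main obstacle to be interpretive rather than mathematical. Definition~\ref{def:fluctuations} is somewhat loosely indexed: the pairs $G_{i_j},G_{i_{j+1}}$ and the bound $i_k<n$ make it unclear exactly which pairs of QBAGs are compared. The plan sidesteps this by proving the stronger claim that no crossing occurs between any two QBAGs in the chain, so the precise indexing convention does not matter.
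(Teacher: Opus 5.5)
Your proposal is correct and follows the paper's own argument: safety gives $\fs_G(\argx)\geq t$ for every $G\in\cal G$ and $\argx\in T$, so no argument ever falls strictly below $t$, no argument can move across the threshold, and $T$ cannot oscillate. Your added care about antisymmetry and about the loose indexing in the fluctuation definition spells out what the paper's short contradiction argument leaves implicit. The same goes for the non-emptiness of $T$, which is needed so that ``each $\argx\in T$ shows a fluctuation'' is not vacuously true.
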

    \begin{proof}
        Suppose $T$ is safe, then by definition $\fs_G(\argx) \geq t$,  $\forall G \in {\cal G}, \forall \argx \in T$.
        Now, if $T$ oscillates, then each $\argx\in T$ must cross the threshold $t$ at least once. This is impossible, as it would contradict the safety of $T$.
    \end{proof}

    On the other hand, oscillating topic sets can be live, given an appropriate number of fluctuations.
    This can be observed in Example~\ref{ex:liveness-example} and Figure~\ref{fig:intro}, where $\{\arga, \argb\}$ is live and exactly 2-oscillating.
    
    In certain cases, we can guarantee that no fluctuations are observed in the topic set.
    These include cases of weak expansions and evolutions that only append downstream arguments.
    We formally state this below as a corollary; the proof is a straight-forward application of Theorem~\ref{thm:weak-strong-safety-coincide} and Theorem~\ref{thm:safety-characterization-using-liveness}.
    \begin{corollary}
        \label{thm:weak-expansion-live}
        Suppose that one of the following cases holds:
        \begin{enumerate}
            \item
            The chain $\cal G$ has length $1$, or
            \item
            $\cal G$ is a weak expansion chain and $\fs$ is a modular semantics.
        \end{enumerate}
        Then, for all $\argx \in \Args$, $\argx$ shows no fluctuations w.r.t. $\cal G,\sigma$, and $t \in \mathbb{I}$. $\hfill \square$
    \end{corollary}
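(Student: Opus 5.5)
The plan is to show directly that every argument keeps a constant final strength along the chain in both cases. A fluctuation requires two QBAGs in the chain between which the argument moves across $t$, so constant final strengths rule out any fluctuation. The paper suggests routing this through Theorem~\ref{thm:weak-strong-safety-coincide} and Theorem~\ref{thm:safety-characterization-using-liveness}. I will use the \emph{content} of those proofs (constancy of final strengths) rather than their statements. The reason is that a non-safe singleton $\{\argx\}$ is not covered by Theorem~\ref{thm:safety-characterization-using-liveness}, and its oscillation is still what must be excluded.

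\textbf{Case (i).} Let $\cal G = (G_1)$ have length $1$. A $k$-fluctuation with $k \geq 1$ requires indices with $1 < i_1 < n = 1$, and no such index exists. Even if one reads Definition~\ref{def:fluctuations} more liberally, as allowing the pair $G_1, G_1$, moving across the threshold between $G$ and itself would require $\fs_{G}(\argx) < t \leq \fs_{G}(\argx)$. That is impossible. Hence no argument shows fluctuations.

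\textbf{Case (ii).} Let $\cal G$ be a weak expansion chain and $\fs$ modular. As in the proof of Theorem~\ref{thm:weak-strong-safety-coincide}, I invoke the strong directionality result for modular semantics~\cite{kampik-potyka}. Under a weak expansion, no new edge targets an existing argument. So for any $G_i \sqsubset G_j$ in the chain and any $\argx \in \Args_{G_i}$, the set of arguments reaching $\argx$, together with their initial strengths and edges, is the same in $G_i$ and $G_j$. Consequently $\fs_{G_i}(\argx) = \fs_{G_j}(\argx)$.

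To be careful, I will check one detail: edges between two old arguments are also excluded. Normality requires one endpoint of every new edge to be new, and weakness requires the target to be new. So the upstream subgraph of an old argument is untouched, and the lemma applies.

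Then for every $\argx$ that lies in all graphs of the chain, the final strength $\fs_{G_i}(\argx)$ is independent of $i$. Neither $\fs_{G}(\argx) < t \leq \fs_{G'}(\argx)$ nor its mirror can hold, so no fluctuation occurs for any $t$.

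\textbf{Main obstacle.} The only delicate point is the scope ``for all $\argx \in \Args$''. Fluctuations are defined only for topic arguments, which lie in every $G_i$. I will therefore read $\Args$ as $\Args_{G_1}$, equivalently $\bigcap_i \Args_{G_i}$ since the graphs are nested, so that final strengths are defined throughout. Under this reading, the remaining step is just the directionality lemma and needs no new idea.
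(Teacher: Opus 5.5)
Your proof is correct. It shares its key ingredient with the paper: the strong-directionality lemma of \cite{kampik-potyka}, which makes old arguments' final strengths invariant under weak expansions with a modular semantics. The routing, however, is different. The paper calls the corollary a straightforward application of the \emph{statements} of Theorem~\ref{thm:weak-strong-safety-coincide} and Theorem~\ref{thm:safety-characterization-using-liveness}. You instead use the \emph{content} of the proof of Theorem~\ref{thm:weak-strong-safety-coincide}, namely constancy of $\fs_{G_i}(\argx)$ along the chain, and conclude directly that no pair of graphs witnesses a crossing.

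Your reason for preferring this is right. Theorem~\ref{thm:safety-characterization-using-liveness} only covers safe sets. The statement of Theorem~\ref{thm:weak-strong-safety-coincide}, even quantified over all thresholds, only forces $\fs_{G_n}(\argx)$ to be the minimum strength along the chain (for $\mathbb{I}=[0,1]$). A strength profile $0.3, 0.5, 0.3$ satisfies ``safe iff live'' for every $t$, yet crosses $t = 0.4$ twice. Nothing in the two statements excludes it; excluding it needs exactly the constancy you use. Constancy then handles the always-credible and the never-credible arguments at once.

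So the paper's phrasing buys brevity by citing named results, and yours buys a complete argument. Yours is also insensitive to the index constraints in Definition~\ref{def:fluctuations} (which, read literally, are ill-posed), provided a fluctuation is understood to need at least one crossing between two graphs of the chain. Your check that normality plus weakness leaves the whole upstream subgraph of an old argument unchanged, with initial strengths fixed by $\sqsubseteq$, is appropriate. So is your reading of $\Args$ as $\Args_{G_1} = \bigcap_i \Args_{G_i}$.
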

%

    \subsubsection*{\bf Discrete Fairness}
    \label{subsec:Fairness}
    Conceptually, our notions of fairness evaluate whether the topic arguments have an equal shot to achieve safety or liveness.
    Accordingly, these notions reflect the concept of \emph{resource distribution} (of final strengths equal to or greater than the threshold) among arguments in the topic set.
    Intuitively, these notions condition the safety or liveness of all topic arguments on the existence
    of safety or liveness in one topic argument in terms of a necessary condition.
    We introduce three notions of discrete fairness, namely \emph{ideally fair}, \emph{live fair}, and \emph{cautiously fair}.
    The notion of ideally fair stipulates that the safety of one topic argument implies the safety of the whole topic set.
    Live fair is the counterpart for liveness of ideally fair, i.e., if one topic argument is live, then the whole topic set is live.
    Lastly, cautiously fair is a weakening of the ideally fair condition.
    If any topic argument is safe, then the whole topic set is at least live.

    \begin{definition}[Discrete Fairness]
        \label{def:strength-fairness}
        We say a topic set $T$, w.r.t. ${\cal G}$,$\fs$, and $t$, is
        \begin{enumerate}
            \item
            \textbf{Ideally Fair}
            if and only if some $\{\argx\}\subseteq T$ is safe implies $T$ is safe.
            \item
            \textbf{Live Fair}
            if and only if some $\{\argx\}\subseteq T$ is live implies $T$ is live.
            \item
            \textbf{Cautiously Fair}
            if and only if some $\{\argx\}\subseteq T$ is safe implies $T$ is live.
        \end{enumerate}
    \end{definition}
    In our running example of Figure~\ref{fig:intro}, sets of arguments satisfy different fairness notions.
    \begin{example}
        Refer again to Figure~\ref{fig:intro}, and consider $\{\arga,\argb,\argc\}$ to be the topic set.
        Example~\ref{ex:safety-example} highlights that $\{\argc\}$ is safe w.r.t. $(G, G', G'')$ and threshold $0.2$.
        The final strengths of all topic arguments final attain $0.2$ at the end point $G''$, thereby making the set $\{\arga, \argb, \argc\}$ both cautiously fair and live fair.
        However, ideal fairness is violated by $\{\arga, \argb, \argc\}$ as $\{\argc\}$ is safe, whereas $\{\arga\}$ and $\{\argb\}$ are live but not safe.
    \end{example}
    The following connections are present between the discrete notions of fairness.
    \begin{theorem}
        For any topic set $T$, the following holds w.r.t. ${\cal G}$, $\fs$, and $t$:
        \begin{enumerate}
            \item
            Live fairness implies cautious fairness;
            \item
            Ideal fairness implies cautious fairness;
            \item
            Ideal fairness implies live fairness in the presence of a safe argument.
        \end{enumerate}
    \end{theorem}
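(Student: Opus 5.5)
The three items are purely propositional consequences of Definitions~\ref{def:safety}, \ref{def:weak-safety}, and \ref{def:strength-fairness}, together with Theorem~\ref{thm:strong-implies-weak-safety}. The plan is to treat each item separately. In each case we unfold the fairness notions as implications of the form ``some singleton $\{\argx\}\subseteq T$ is safe (resp.\ live) $\Rightarrow$ $T$ is safe (resp.\ live)''. We then chain these implications, using the fact that safety implies liveness both for singletons and for $T$ itself.

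\textbf{Item 1.} Assume $T$ is live fair, and suppose some $\{\argx\}\subseteq T$ is safe. By Theorem~\ref{thm:strong-implies-weak-safety}, applied to the topic set $\{\argx\}$, the singleton $\{\argx\}$ is live. Live fairness then gives that $T$ is live, which is exactly the consequent required by cautious fairness.

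\textbf{Item 2.} Assume $T$ is ideally fair, and suppose some $\{\argx\}\subseteq T$ is safe. Ideal fairness yields that $T$ is safe. Theorem~\ref{thm:strong-implies-weak-safety}, applied to $T$, then yields that $T$ is live, so $T$ is cautiously fair.

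\textbf{Item 3.} Here the hypothesis ``in the presence of a safe argument'' must be read precisely. We take it to mean that some $\{\arga\}\subseteq T$ is safe w.r.t.\ $\cal G$, $\fs$, and $t$. Assume also that $T$ is ideally fair. From the safe singleton, ideal fairness gives that $T$ is safe, and by Theorem~\ref{thm:strong-implies-weak-safety} $T$ is live. Consequently the implication ``some $\{\argx\}\subseteq T$ is live $\Rightarrow$ $T$ is live'' holds, because its consequent is true regardless of which $\argx$ is chosen. Hence $T$ is live fair.

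The only real subtlety is this reading of ``presence of a safe argument'' in item~3. Without it, ideal fairness can hold vacuously: no argument is safe, yet some argument is live and $T$ is not live. In that case live fairness fails, so the hypothesis is genuinely needed. I would note this briefly to justify the qualifier. The remaining steps are routine unfoldings of the definitions.
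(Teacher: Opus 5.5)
Your proof is correct and follows the paper's argument essentially step for step. In each item you unfold the discrete fairness definitions and use Theorem~\ref{thm:strong-implies-weak-safety} to pass from safety to liveness: for the singleton in item~1, and for $T$ in items~2 and~3. Your remark that the safe-argument hypothesis is genuinely needed in item~3 is correct and goes slightly beyond the paper, which does not discuss why the qualifier is there.
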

    \begin{proof}
        For the first case, assume $T$ is live fair w.r.t. $\cal G$, $\fs$, and $t$.
        If there is a safe $\{\argx\} \subseteq T$, then by Theorem~\ref{thm:strong-implies-weak-safety}, $\{\argx\}$ is live, and consequently $T$ is live due to live fairness.
        This satisfies the constraint for being cautiously fair.

        For the second case, suppose $T$ is ideally fair, and $\{\argx\}\subseteq T$ is safe.
        Now, by ideal fairness of $T$, we get that $T$ is safe.
        By Theorem~\ref{thm:strong-implies-weak-safety}, $T$ is also live.

        For the third case, suppose $T$ is ideally fair with a safe topic argument $\argx$, which implies $T$ is safe, and by Theorem~\ref{thm:strong-implies-weak-safety}, $T$ is live, and therefore live fairness is satisfied.
    \end{proof}
    Additionally, given a safe topic argument, ideal fairness prohibits the oscillation of topic arguments, thereby implying an ``uninteresting'' behavior of the dialogue with respect to the concerned arguments.
    \begin{theorem}
        W.r.t. $\cal G$, $\fs$, and $t$, if $T$ is ideally fair, then $T$ does \emph{not} oscillate whenever there is a safe $\{\argx\}\subseteq T$.
    \end{theorem}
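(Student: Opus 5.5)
The plan is to chain the definition of ideal fairness with Theorem~\ref{thm:safety-characterization-using-liveness}. Assume $T$ is ideally fair w.r.t. $\cal G$, $\fs$, and $t$, and that some singleton $\{\argx\}\subseteq T$ is safe w.r.t. $\cal G$, $\fs$, and $t$.

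The first step applies Definition~\ref{def:strength-fairness}(1). The antecedent ``some $\{\argx\}\subseteq T$ is safe'' holds by assumption, so $T$ itself is safe. Unfolding Definition~\ref{def:safety}, this gives $\fs_G(\argy)\geq t$ for every $G\in\cal G$ and every $\argy\in T$.

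The second step invokes Theorem~\ref{thm:safety-characterization-using-liveness}: since $T$ is safe, it does not oscillate.

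For readers who want this spelled out, I will also argue it directly from Definitions~\ref{def:fluctuations} and~\ref{def:liveness}. Suppose $T$ oscillated. Then, since $T$ is non-empty, some $\argy\in T$ would show at least a $1$-fluctuation, i.e., move across $t$ between two QBAGs $G,G'$ of the chain. This requires $\fs_G(\argy)<t$ or $\fs_{G'}(\argy)<t$, which contradicts the safety of $T$. So no argument of $T$ fluctuates at all, which in fact gives the stronger conclusion that no subset of $T$ oscillates.

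I expect no real obstacle here. The only point needing care is the one just mentioned: oscillation is defined as every member of $T$ fluctuating, so I must use non-emptiness of $T$ (fixed in the setup) to turn ``no member fluctuates'' into ``$T$ does not oscillate''. I also need to read the index range in Definition~\ref{def:fluctuations} so that any fluctuation involves some QBAG of $\cal G$ where the strength is below $t$. Both are immediate, since every fluctuation witness consists of QBAGs in $\cal G$.
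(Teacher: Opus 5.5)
Your proposal is correct and follows essentially the same route as the paper. Ideal fairness together with the safe singleton makes $T$ safe, so every topic argument attains $t$ in every QBAG of the chain and none can fluctuate; the paper argues this last step inline instead of citing Theorem~\ref{thm:safety-characterization-using-liveness}, which is the same reasoning, and your explicit use of the non-emptiness of $T$ is a small point of care the paper leaves implicit.
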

    \begin{proof}
        Suppose $T$ is ideally fair w.r.t. ${\cal G}$,$\fs$, and $t$, with a safe $\{\argx\} \subseteq T$.
        This implies every $\argx \in T$ satisfies $\fs_G(\argx) \geq t$ for every $G \in \cal G$, thus showing no fluctuations; accordingly, $T$ does not oscillate.
    \end{proof}

    \subsection*{Gradual Fairness}
    \label{sec:fluctuation-fairness}
    Conceptually, our notions of safety and liveness analyze whether the topic arguments are credible over the course of a dialogue: safety stipulates that the topic arguments attain the specified threshold, whereas liveness ensures threshold attainment at the end of the dialogue.
    Hence, we may ask \emph{to what extent} all topic arguments have equal opportunities to attain credibility over the course of the dialogue, i.e., how balanced the distribution of threshold-attaining arguments is among the topic set across the QBAGs in a given chain.

    To that end, we introduce gradual notions of fairness, aimed to measure the fairness of threshold-attainment among arguments in the topic set.
    Specifically, we consider two notions: Gini-based fairness and Shannon-based fairness.
    The former is based on the Gini index~\cite{gini1912}, which traditionally measures ``inequalities'' in the distribution of wealth, income, or consumption.
    In our setting, we measure how closely threshold attainment within a topic set is aligned with an \emph{ideal scenario}: in this (typically counterfactual) scenario, the total number of threshold-attaining instances within the topic set is equally distributed among each topic argument.

    \begin{definition}[Fairness Line and Safety Curve]
        Let $\cal G$ be a chain, $T$ be a non-empty finite set of topic arguments, and $t$ be a threshold. For any argument $\argx \in T$, we define
        \begin{equation}
            S_{{\cal G}, \fs, t}(\argx) := |\{G\in {\cal G} \ :\ \fs_G(\argx)\geq t \}| \ ,
        \end{equation}
        which is the number of QBAGs in $\cal G$ for which $\argx$ \emph{attains} $t$. We drop the subscripts whenever the context is clear. Let $S_{\text{total}} = \sum _{\argy \in T} S(\argy)$ denote the sum of threshold attainments for all the arguments.

        The \emph{fairness line} $f(x)$ is the linear segment joining $(0, 0)$ to $( |T|, S_{\text{total}} )$, defined piecewise as:
        \begin{equation}
            \label{eq:fairness-line}
            f(x):=
            \begin{cases}
                0 & \text{if } x < 0 ,
                \\
                \displaystyle \frac{S_{\text{total}}}{|T|} \cdot x & \text{if } 0 \leq x < |T| ,
                \\
                S_{\text{total}} & \text{if } x \geq |T| \ .
            \end{cases}
        \end{equation}

        Let $n = |T|$, and let $\argz_{0}, \dots, \argz_{n-1}$ be the arguments in $T$ arranged in increasing order of the number of times that each argument attains the threshold, i.e., if $0 \leq j < i < n$, then $S(\argz_j) \leq S(\argz_i)$.
        The \emph{safety curve} is a piecewise linear spline defined as:

        \begin{equation}
            \label{eq:safety-curve}
            \zeta_{{\cal G},T,\fs}(x) :=
            \begin{cases}
                0 & \text{if } x < 0 , \\
                S (\argz_0) \cdot x & \text{if } 0 \leq x < 1 ,
                \\
                \left(\sum \limits_{0 \leq j < i} S (\argz_j)\right)
                + S (\argz_i) \cdot (x - i) & \makecell[l]{\text{if } i \leq x < i+1 \\ \text{ for } i \in \{1, \dots , n-1 \}, n > 1,}
                \\
                S_{\text{total}} & \text{if } x \geq n \ . \\
            \end{cases}
        \end{equation}

    \end{definition}

    Gini-based fairness measures fairness in terms of the area enclosed between the fairness line and the safety curve.
    Formally, it is introduced as follows.
    \begin{definition}[Gini-based Fairness]
        \label{def:gini-based-fairness}
        For the topic set $T$, define the un-normalized Gini-based fairness score using Equation~\ref{eq:safety-curve} w.r.t. ${\cal G}$, $\fs$, and $t$:
        \begin{equation}
            \label{eq:gini-unnormalized}
            {\cal F}^{u}_{Gini}({\cal G}, T, \fs, t)
            :=
            \int_{0}^{|T|}|\zeta_{{\cal G}, \fs, t}(x) - f(x)| \ dx.
        \end{equation}
        Thus, the (normalized) Gini-based fairness score is defined as
        \begin{equation}
            \label{eq:gini-normalized}
            {\cal F}_{Gini}({\cal G}, T, \fs, t)
            :=
            \left(\frac{2}{{1 + e^{-{\cal F}^{u}_{Gini}({\cal G}, T, \fs, t)}}} - 1\right).
        \end{equation}
    \end{definition}
    We illustrate the fairness notion with the help of the following example.
    \begin{example}
        \label{ex:fluctuation-fairness-example}
        Consider the set of topic arguments $T=\{\arga, \argb, \argc\}$ in Figure~\ref{fig:intro}.
        Here, $\arga$ and $\argb$ remain above the threshold $t = 0.2$ for the QBAGs $G, G''$, w.r.t. the chain $(G, G', G'')$.
        The argument $\argc$, with $\{\argc\}$ being safe, attains $t$ for the entire chain, i.e., in all three QBAGs, while $\arga$ and $\argb$ attain $t$ in only two QBAGs. Thus, $T$ can be ordered as $(\arga, \argb, \argc)$, corresponding to $\langle \arga, 2 \rangle \preceq \langle \argb, 2 \rangle \prec \langle \argc, 3 \rangle$ when arranged by the number of threshold-attaining instances over the whole chain.

        Figure~\ref{fig:fairness-line} plots the \emph{fairness line}
        (dotted line) using Equation~\ref{eq:fairness-line}.
        This line represents the situation where the seven threshold-attainment instances are equally distributed among the topic arguments in $T$.
        The \emph{fairness line} connects the points $(0,0)$ and $(3,7)$, as per Equation~\ref{eq:fairness-line}.
        The \emph{safety curve} (solid line) in Figure~\ref{fig:fairness-line} is plotted as per Equation~\ref{eq:safety-curve}, and the actual fluctuations shown by $T$.
        Intuitively, the fairness line shows the cumulative sum of fluctuations w.r.t. the topic arguments according to the aforementioned ordering.
        Fairness is now measured in terms of the area enclosed between the safety curve and the fairness line.
        In the example, the Gini-based fairness score is $\approx 0.46$. 
        \begin{figure}
            \centering  \includegraphics[width=0.5\linewidth]{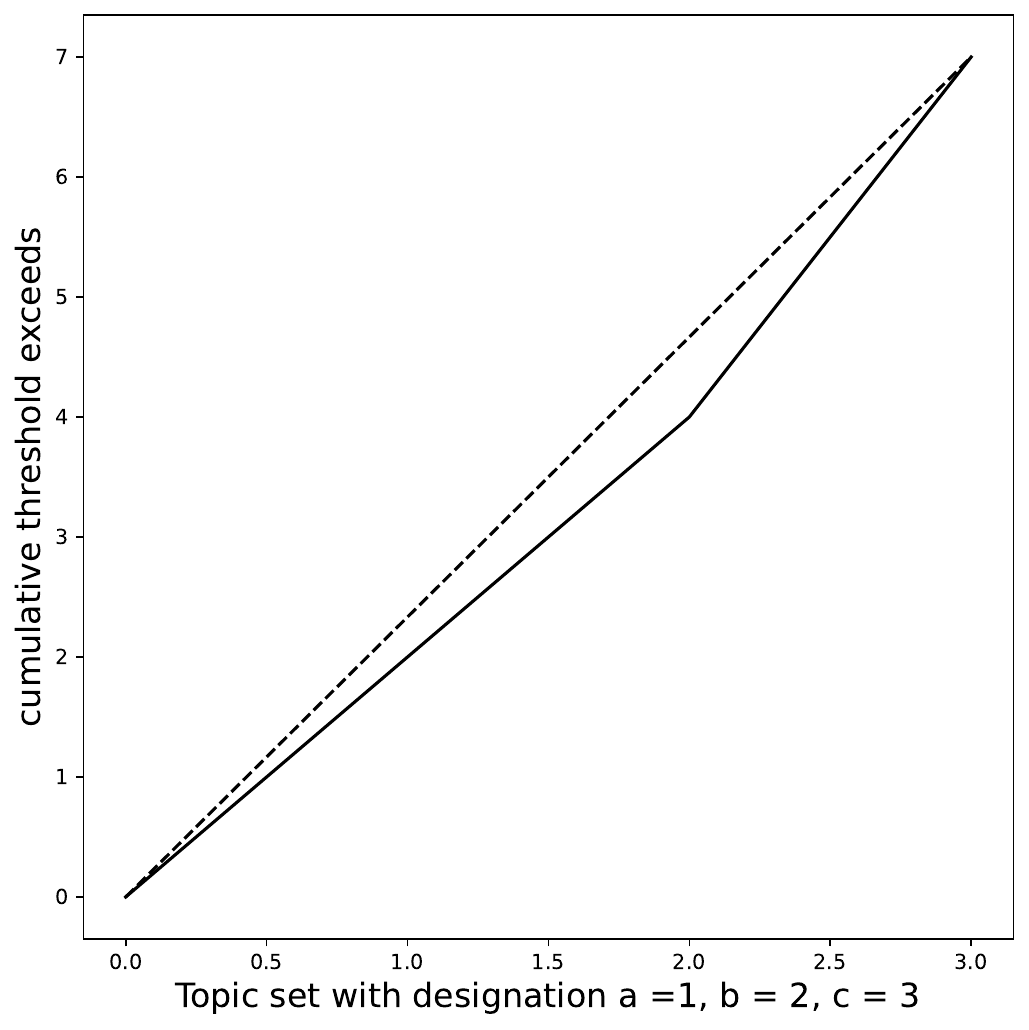}
            \caption{Plot of the fairness line versus $\zeta$.
            The solid line is due to the safety curve $\zeta$, and the fairness line is represented using dashes.
            The area enclosed between the two curves measures the inequality in fluctuation.}
            \label{fig:fairness-line}
        \end{figure}
    \end{example}

    As noted in Example~\ref{ex:fluctuation-fairness-example},
    the last equation ensures monotonicity of the Gini-based fairness notion; as the safety curve deviates further from the fairness line, ${\cal F}_{Gini}$ approaches 1.
    The case of perfect equality yields a score of $0$.
    \begin{theorem}
        \label{thm:well-definedness-of-gini-fairness}
        Fairness of fluctuations is a positive function having its range in the unit interval $[0,1)$.
        The value $0$ represents perfect equality, i.e., when the safety curve coincides with the fairness line.
    \end{theorem}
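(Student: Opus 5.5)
The plan is to separate the statement into three claims and prove them in this order. The un-normalized score ${\cal F}^{u}_{Gini}$ lies in $[0,\infty)$. The normalizing map sends $[0,\infty)$ into $[0,1)$. The score is $0$ exactly when $\zeta$ and $f$ coincide on $[0,|T|]$.

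\textbf{Step 1: the un-normalized score.} The integrand $|\zeta(x) - f(x)|$ is nonnegative. Both functions are piecewise linear on the bounded interval $[0,|T|]$, with finitely many breakpoints at the integers. Hence the integrand is bounded and piecewise continuous, so the integral exists and is a real number with ${\cal F}^{u}_{Gini} \geq 0$. Since $S(\argx) \leq |{\cal G}|$ for every $\argx \in T$, one could also bound it explicitly by $|T|\cdot S_{\text{total}}$, but finiteness is all that is needed.

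\textbf{Step 2: the normalization.} Consider $h(u) = \frac{2}{1+e^{-u}} - 1 = \tanh(u/2)$ for $u \geq 0$. It is continuous and strictly increasing, because $e^{-u}$ is strictly decreasing. It satisfies $h(0) = 0$, and $h(u) < 1$ for all finite $u$ because $e^{-u} > 0$. Composing with Step 1 gives ${\cal F}_{Gini} \in [0,1)$. Strict monotonicity also justifies the remark before the statement that larger deviation pushes the score towards $1$.

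\textbf{Step 3: the characterization of $0$.} Since $h$ is strictly increasing with $h(0)=0$, we have ${\cal F}_{Gini}=0$ if and only if ${\cal F}^{u}_{Gini}=0$.
\begin{itemize}
\item[$(\Leftarrow)$] If $\zeta = f$ on $[0,|T|]$, the integral vanishes.
\item[$(\Rightarrow)$] If the integral of the nonnegative function $|\zeta - f|$ vanishes, then $|\zeta - f|$ is zero on $[0,|T|)$. Here I would invoke continuity on each interval $[i,i+1)$: a nonnegative continuous function with zero integral on an interval is identically zero there. At $x=|T|$ both functions equal $S_{\text{total}}$ by definition.
\end{itemize}
So a zero score means the safety curve coincides with the fairness line on $[0,|T|]$.

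As a sanity check, I would also show that this coincidence happens precisely when all $S(\argz_i)$ are equal. Coincidence forces every slope $S(\argz_i)$ to equal $S_{\text{total}}/|T|$. Conversely, equal values make $\zeta$ linear with exactly that slope. This identifies the zero case with perfect equality of threshold attainment.

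\textbf{Main obstacle.} The only delicate points are regularity and a mismatch in the statement. For regularity, $\zeta$ is defined piecewise, so I must verify it is continuous at the integers. Its value at $x=i$ from the left is $\sum_{j<i}S(\argz_j)$, which matches the next piece, so $\zeta$ is continuous and the ``zero integral implies zero function'' argument applies cleanly. For the mismatch, the statement says ``positive'', but $0$ is attained. I would read it as ``nonnegative'' and state explicitly that the range is $[0,1)$, with $0$ attained exactly in the equality case.
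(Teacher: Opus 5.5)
Your proposal is correct and follows essentially the same route as the paper: nonnegativity of ${\cal F}^{u}_{Gini}$, combined with the behaviour of the map $u \mapsto \frac{2}{1+e^{-u}}-1$, gives the range $[0,1)$, and a vanishing integral gives the score $0$. You go further than the paper in two ways: it only argues that coincidence of $\zeta$ with the fairness line yields $0$, whereas you also prove the converse (via continuity of $\zeta$ at the integers), and you rightly note that ``positive'' must be read as ``nonnegative''.
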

    \begin{proof}
        Notice that ${\cal F}^{u}_{Gini}({\cal G},T,\fs,t) \geq 0$ implies that $0\leq {\cal F}_{Gini}({\cal G},T,\fs,t)<1$.
        In the case where ${\cal F}^{u}_{Gini}({\cal G},T,\fs,t)=0$, i.e., when the fairness line and $\zeta$ coincide,
        $
        \left(\frac{2}{{1 + e^{-{\cal F}^{u}_{Gini}({\cal G}, T, \fs, t)}}}\right) = 1
        $
        and by definition, ${\cal F}_{Gini}({\cal G}, T, \fs, t) = 0$.
    \end{proof}
    \vspace{2 em}

    Our second gradual fairness notion is based on Shannon entropy~\cite{shannon_entropy}.

    \begin{definition}[Threshold-attainment Probability]
        The \emph{threshold-attainment probability} for the set of topic arguments $T$ w.r.t. ${\cal G}$, $\fs$, and a threshold $t\in \mathbb{I}$ is defined as
        \[
            p_{{\cal G}, t, \fs}(\argx)
            :=
            \frac{S_{{\cal G}, t, \fs}(\argx)}{\sum_{\argy\in T}S_{{\cal G}, t, \fs}(\argy)} \ ,
        \]

        which is the ratio between the total threshold-attaining QBAGs for $\argx$ in $\cal G$ and the total number of threshold-attaining QBAGs for all the arguments in the entire set of topic arguments $T$.
    \end{definition}

    We drop the subscript whenever it is clear from context.
    Notice that $p$ is \emph{undefined} in the absence of threshold-attainment behavior, i.e., if no argument in $T$ attains the threshold.

    Given an arbitrary threshold-attainment behavior, we want to measure how \emph{surprising} it is.

    \begin{definition}[Information Function]
        Formally, we define the \emph{information function} $I_{p}$ in terms of a topic argument $\argx$ with probability $p(\argx)$ as
        \[
            I_{p}(\argx)
            :=
            \log_b\left(\frac{1}{p(\argx)}\right) = -\log_b(p(\argx)) = \frac{- \log (p(\argx))}{\log b},
        \]
        where $b = |T|$.
    \end{definition}

    Note that $I_p(\argx)$ is either non-negative or \emph{undefined}.
    $I_p$ is undefined whenever $p_{{\cal G},\fs, t}(\argx)=0$ or $\log b=0$;
    the former corresponds to the cases where $\argx$ is consistently below the credibility threshold throughout the dialogue.
    When accounting for Shannon fairness, we leave out such cases, i.e, Shannon based fairness ignores consistently marginal arguments.
    On the other hand, the case where $\log b = 0$ occurs is when the topic set is a singleton, and in this case, we trivially assign the value $1$.

    Formally, Shannon-based fairness is defined as the expectation of the surprise factor.
    \begin{definition}[Shannon-based Fairness]
        \label{def:shannon-based-fairness}
        For the topic arguments $T$ w.r.t. the chain $\cal G$, semantics $\fs$ and threshold $t \in \mathbb{I}$, Shannon-based fluctuation fairness is defined as:
        \begin{equation}
            \label{eq:shannon-fairness-condensed}
            \mathcal{F}_{Shannon}({\cal G}, T, \fs, t) :=
            \begin{cases}
                \sum \limits _{\substack{\argx\in T : p(\argx)> 0}} p(\argx) \cdot I_{p} (\argx) & \makecell[l]{\text{if $p$ defined}, b > 1,}
                \\
                1 & \text{otherwise}.
            \end{cases}
        \end{equation}
        which is expanded to
        \begin{equation}
            \label{eq:shannon-fairness-expanded}
            \mathcal{F}_{Shannon}({\cal G}, T, \fs, t) :=
            \begin{cases}
                \displaystyle \frac{-1}{\log b} \cdot \sum \limits _{\substack{\argx\in T: p(\argx) > 0}}p(\argx) \cdot \log (p(\argx)) & \makecell[l]{\text{if $p$ defined}, b > 1,}
                \\
                1 & \text{otherwise}.
            \end{cases}
        \end{equation}
        where $b = |T|$.
    \end{definition}
    \begin{example}
        Let us determine the Shannon-based fairness for the topic set $T = \{\arga, \argb, \argc\}$ and threshold $t = 0.2$ in Figure~\ref{fig:intro}.
        We first compute the values of $S(\argx)$ for each $\argx \in T$, which are $\{ \langle \arga , 2 \rangle, \langle \argb , 2 \rangle, \langle \argc , 3 \rangle \}$, and the sum of these values is $7$.
        The probability of fluctuation is given by $p(\arga) = \frac{2}{7}$, $p(\argb) = \frac{2}{7}$, and $p(\argc) = \frac{3}{7}$.

        Since $b = 3$,
        \[
            \mathcal{F}_{Shannon}({\cal G},T,\fs,t) = \displaystyle \frac{- 1}{\log 3} \left( \frac{2}{7} \cdot \log \frac{2}{7} + \frac{2}{7} \cdot \log \frac{2}{7} + \frac{3}{7} \cdot \log \frac{3}{7} \right) \approx 0.98 \ .
        \]
    \end{example}
    The next result provides a case where Shannon-based fairness is maximized.
    \begin{theorem}
        \label{thm:shannon-fairness-uniform-distribution}
        If $\forall \argx,\argy \in T, S (\argx) = S (\argy)$ w.r.t. $\cal G, \fs$ and $t \in \mathbb{I}$,
        then $\mathcal{F}_{Shannon}({\cal G}, T, \fs) = 1$.
    \end{theorem}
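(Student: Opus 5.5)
The plan is a case split on whether the first branch of Definition~\ref{def:shannon-based-fairness} applies, followed by a direct computation in that branch.

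\textbf{Cases where the value is $1$ by definition.} Let $b = |T|$, and let $s$ be the common value of $S(\argx)$ for $\argx \in T$, as in the hypothesis. If $b = 1$, the ``otherwise'' branch of Equation~\ref{eq:shannon-fairness-condensed} gives $\mathcal{F}_{Shannon} = 1$ immediately. If $s = 0$, then $\sum_{\argy\in T} S(\argy) = 0$ and $p$ is undefined, so the ``otherwise'' branch again yields $1$.

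\textbf{Main case: $b > 1$ and $s > 0$.} Here $p$ is defined. For every $\argx \in T$,
$$p(\argx) = \frac{s}{b\cdot s} = \frac{1}{b} > 0,$$
so no argument is excluded from the sum. Substituting into Equation~\ref{eq:shannon-fairness-expanded} gives
$$\mathcal{F}_{Shannon} = \frac{-1}{\log b}\sum_{\argx\in T}\frac{1}{b}\log\frac{1}{b} = \frac{-1}{\log b}\cdot b\cdot\frac{1}{b}\cdot(-\log b) = 1.$$
Since $b > 1$, we have $\log b \neq 0$, so the division is legitimate.

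There is no genuine obstacle. The only care needed is the bookkeeping of the degenerate cases: the undefined $p$ when $s = 0$, and $\log b = 0$ when $b = 1$. Both are settled by the ``otherwise'' clause. One should also note that the statement omits $t$ from the arguments of $\mathcal{F}_{Shannon}$, but it is understood to be fixed.
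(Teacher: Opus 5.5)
Your proof is correct and takes the same route as the paper. The paper first disposes of the all-zero case (undefined $p$) via the ``otherwise'' clause. It then notes that $p$ is uniform, equal to $1/|T|$, and sums to $1$; it phrases this through $I_p = -\log_{|T|}(1/|T|) = 1$ rather than the expanded formula, which is immaterial.

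Your explicit treatment of the singleton case $|T| = 1$ is slightly more careful than the paper. The paper's computation silently passes over that case, even though $\log_m(1/m)$ is undefined for $m = 1$.
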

    \begin{proof}
        If $\forall \argx \in T, S (\argx) = 0$, ${\cal F}_{Shannon}({\cal G}, T, \fs, t) = 1$ by definition, because $p$ is undefined.

        Suppose that $\forall \argx,\argy \in T, S (\argx) = S (\argy)$ and that $\forall \argx \in T,S (\argx) > 0$. This immediately implies that $\forall \argx, \argy, p(\argx) = p(\argy)$.
        This, in turn, implies $p(\argx) = \frac{1}{m}$, where $m = |T|$, and consequently $b = |T| = m$.
        Notice that for each $\argx \in T$,
        \[
            I_{p}(\argx) = -\log_b(p(\argx)) =  -\log _{m} \left(\frac{1}{m}\right) = 1 \ .
        \]
        Thus, the result is given by the following chain of equalities:
        \[
                {\cal F}_{Shannon}({\cal G}, T, \fs, t)
            =
            \sum_{\argx \in T}p(\argx) \cdot I_{p}(\argx) = \sum_{\argx \in T} \frac{1}{m} = 1 \ .
        \]
    \end{proof}
    Theorem~\ref{thm:shannon-fairness-uniform-distribution} shows the interplay between binary and gradual notions of fairness.
    If a topic $T$ is ideally fair in the presence of some strongly safe $\{\argx\}\subseteq T$, then ${\cal F}_{Shannon}({\cal G}, T, \fs, t) = 1$.
    Similarly, given the previous scenario, ${\cal F}_{Gini}({\cal G}, T, \fs, t) = 0$.
    \begin{theorem}
        If $T$ is ideally fair w.r.t. ${\cal G}, \fs, t$, then ${\cal F}_{Gini}({\cal G}, T, \fs, t) = 0$ and ${\cal F}_{Shannon}({\cal G}, T, \fs, t) = 1$ whenever there exists $\{\argx\} \subseteq T$ such that $\{\argx\}$ is safe w.r.t. ${\cal G}, \fs, t$.
    \end{theorem}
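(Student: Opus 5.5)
The plan is to reduce both claims to the observation that every topic argument attains the threshold in every QBAG of the chain, and then compute the two scores directly. Suppose $T$ is ideally fair w.r.t. ${\cal G}=(G_1,\dots,G_n)$, $\fs$, and $t$, and that some $\{\argx\}\subseteq T$ is safe. By Definition~\ref{def:strength-fairness}, $T$ is safe. By Definition~\ref{def:safety}, this means $\fs_{G}(\argy)\geq t$ for every $G\in{\cal G}$ and every $\argy\in T$.

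From the definition of the safety curve it follows that $S(\argy)=n$ for every $\argy\in T$. Here we count $S$ over the positions $G_1,\dots,G_n$ of the chain. If the set notation $\{G\in{\cal G}:\dots\}$ is read literally and the chain repeats a QBAG, $S(\argy)$ is still the same number for every $\argy$, namely the number of distinct QBAGs in the chain. Either way, all values $S(\argy)$ coincide, which is all that is needed.

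\textbf{Shannon part.} Since all $S(\argy)$ are equal, Theorem~\ref{thm:shannon-fairness-uniform-distribution} applies directly and gives ${\cal F}_{Shannon}({\cal G},T,\fs,t)=1$. This covers the singleton case $b=1$ through the ``otherwise'' clause. When $|T|>1$, $p$ is defined because $S_{\text{total}}=n|T|>0$.

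\textbf{Gini part.} With $m=|T|$ and all $S(\argz_i)=n$, I would evaluate $\zeta$ on each interval $[i,i+1)$. There $\zeta(x)=\sum_{j<i}n+n(x-i)=nx$. The fairness line on $[0,m)$ is $\frac{S_{\text{total}}}{m}x=\frac{nm}{m}x=nx$. So $\zeta$ and $f$ agree on $[0,m)$, and at $x=m$ both equal $S_{\text{total}}$. The integrand in Equation~\ref{eq:gini-unnormalized} therefore vanishes identically, giving ${\cal F}^{u}_{Gini}=0$. Then ${\cal F}_{Gini}=\frac{2}{1+e^{0}}-1=0$, as already noted in Theorem~\ref{thm:well-definedness-of-gini-fairness}.

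I do not expect any real obstacle. The only points needing care are the counting convention for $S$ when a QBAG repeats, where the argument only needs equal counts, and checking the piecewise spline against the line interval by interval, which is a telescoping sum.
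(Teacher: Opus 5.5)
Your proof is correct and follows the paper's own argument: ideal fairness plus the safe singleton makes $T$ safe, so all counts $S(\argy)$ coincide, Theorem~\ref{thm:shannon-fairness-uniform-distribution} gives ${\cal F}_{Shannon}=1$, and $\zeta$ coincides with the fairness line, so ${\cal F}_{Gini}=0$. Your interval-by-interval check that $\zeta(x)=nx$, and your remarks on repeated QBAGs and the singleton case $|T|=1$, make explicit what the paper states in a single line.
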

    \begin{proof}
        Notice that if $T$ is ideally fair, then one of its arguments is safe, and all the arguments in $T$ are safe, and all of them attain the threshold in all the QBAGs. Hence, $\forall \argx, \argy\in T$, we have $S (\argx) = S (\argy)$, and this value is the number of QBAGs in ${\cal G}$.
        This means $p$ is a uniform distribution, and by Theorem~\ref{thm:shannon-fairness-uniform-distribution}, ${\cal F}_{Shannon}({\cal G}, T, \fs, t) = 1$.

        The previous observation implies that $\zeta$ is a linear function satisfying $\zeta(0) = 0$ and $\zeta(|T|) = \sum_{\argx \in T} S (\argx)$.
        This, in turn, implies that $\zeta$ coincides with the fairness line, and by Theorem~\ref{thm:well-definedness-of-gini-fairness}, ${\cal F}_{Gini}({\cal G}, T, \fs, t) = 0$.
    \end{proof}

    \section{Discussion}
    \label{sec:discussion-and-conclusion}
    Our work contributes to the line of research on \emph{argumentation dynamics}, which, broadly speaking, studies argumentative inference, where the representation of the exchange of arguments undergoes changes.
    This naturally reflects the dynamic nature of dialectical reasoning.
    Argumentation dynamics are frequently studied for different variants of computational argumentation---see~\cite{doutre-argument} for a survey of works on abstract argumentation dynamics.

    For the specific case of QBAGs, dynamics are often applied in order to facilitate some notion of \emph{explainability}, notably by assessing the impact (a change to) one argument has on the final strength of another one~\cite{DBLP:conf/ecai/0007PT23,DBLP:journals/ijar/KampikPYCT24}.
    In these works, dynamics are thus essentially used as an analytical tool.
    Conversely, our work introduces analytical tools for studying dynamics that occur in sequences of QBAGs, which we call \emph{chains} or, intuitively, \emph{dialogues}.
    It may thus be most closely related to~\cite{KAMPIK-strength-change}, which studies which changes---also called \emph{explanations}---made to a QBAG affect the relative strength of two topic arguments.
    The current work could be extended accordingly to define explanations
    indicating why SOLF notions are satisfied or violated.


    \section{Future Work}
    \label{sec:future-work}
    Our study of SOLF notions for quantitative (bipolar) argumentation dialogues suggests the following notable directions for future work. 
    \\
    
    \noindent\textbf{Providing SOLF guarantees depending on argumentation semantics.}
    Our work analyzes only simple cases where some of our SOLF notions can be guaranteed, notably cases based on weak expansions, where changes do not affect the topic arguments.
    Indeed, we view the SOLF notions as analysis tools that can be applied to \emph{specific instances} of QBAG chains.
    Guaranteeing safety, oscillation, liveness, and fairness
    more generally can be challenging.
    Figure~\ref{fig:dfquad-inference-change} illustrates that even relatively simple scenarios can make it difficult to generally show that SOLF notions are satisfied or violated.

    Consider the QBAG in Figure~\ref{fig:graph-monotonic-effects}, with topic arguments $\arga$ and $\argb$, and assume that the strength of $\argf$ is variable. Consider a scenario where we have a chain of QBAGs that are all based on the one in the figure. Assume that the strength of $\argf$ increases, either due to an initial strength change or the influence of additional supporters, e.g., from $0.1$ to $0.5$ and finally to $0.9$.
    The effects of these changes on the final strengths of $\arga$ and $\argb$ are shown in Figure~\ref{fig:plot-non-monotonic-effects}.
    Notice that, intuitively, the effect of strength changes to $\argf$ on $\arga$ and $\argb$ is \emph{non-monotonic}: it ``switches'' from positive to negative for $\arga$ and inversely for $\argb$---this phenomenon is, e.g., discussed in~\cite{DBLP:journals/ijar/KampikPYCT24}.

    This makes it difficult to assess whether a set of topic arguments, even when consisting of a single argument, is safe or live, and analogously, whether a set of topic arguments is fair. In our example, $\{\arga\}$ is safe and live w.r.t. the thresholds $0.1$ and $0.175$, respectively. However, $\{\argb\}$ fails to satisfy both safety and liveness w.r.t. the above-mentioned thresholds.
    Both $\arga$ and $\argb$ oscillate w.r.t. the threshold $0.175$, but only $\argb$ oscillates w.r.t. the other threshold of $0.1$.
    $\{\argb\}$'s liveness failure prevents $\{\arga, \argb\}$ from satisfying the binary notions of fairness w.r.t. the thresholds $0.1$ and $0.175$. \\
    \begin{figure}
        \centering
        \subfloat[QBAG with variable initial stength of $\argf$.]{
            \centering

            \begin{tikzpicture}[node distance=1.5cm]
                \node[unanode]    (a)    at(0,0)  {\argnode{\arga}{0.2}{\phantom{0}\textbf{?}\phantom{0}}};
                \node[unanode]    (b)    at(4,0)  {\argnode{\argb}{0.0}{\phantom{0}\textbf{?}\phantom{0}}};
                \node[unanode]    (e)    at(0,2)  {\argnode{\arge}{0.0}{\phantom{0}\textbf{?}\phantom{0}}};
                \node[unanode]  (c)    at(2,2)  {\argnode{\argc}{0.0}{\phantom{0}\textbf{?}\phantom{0}}};
                \node[unanode]    (d)    at(4,2)  {\argnode{\argd}{0.0}{\phantom{0}\textbf{?}\phantom{0}}};
                \node[unanode]    (f)    at(2,4)  {\argnode{\argf}{?}{\phantom{0}\textbf{?}\phantom{0}}};

                \path [->, line width=0.2mm]  (e) edge node[left] {+} (a);
                \path [->, line width=0.2mm]  (c) edge node[left] {-} (a);
                \path [->, line width=0.2mm]  (d) edge node[below] {-} (a);
                \path [->, line width=0.2mm]  (e) edge node[above] {-} (b);
                \path [->, line width=0.2mm]  (c) edge node[right] {+} (b);
                \path [->, line width=0.2mm]  (d) edge node[right] {+} (b);
                \path [->, line width=0.2mm]  (f) edge node[left] {+} (e);
                \path [->, line width=0.2mm]  (f) edge node[left] {+} (c);
                \path [->, line width=0.2mm]  (f) edge node[left] {+} (d);
            \end{tikzpicture}
            \label{fig:graph-monotonic-effects}
        }
        \hfill
        \centering
        \subfloat[$\fs(\arga)$ (solid) and $\fs(\argb)$ (dashed), given $\tau(\argf)$.]{\includegraphics[width=0.5\linewidth]{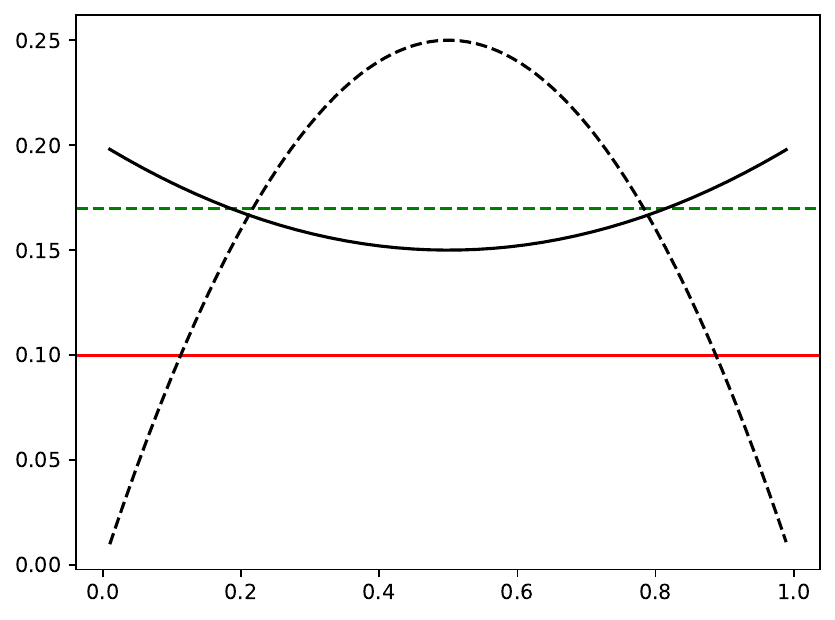}
        \label{fig:plot-non-monotonic-effects}
        }
        \caption{SOLF properties are difficult to guarantee, even in seemingly simple scenarios where we change the initial strength of a single argument that reaches two topic arguments. The horizontal lines in Figure~\ref{fig:plot-non-monotonic-effects}. indicate different threshold values (of $0.1$ and $0.175$, respectively).}
        \label{fig:dfquad-inference-change}
    \end{figure}

    \noindent \textbf{Defining SOLF notions for partially ordered final strengths.}
    In this work, we have made the crucial assumption that our assessments of safety, oscillation, liveness, and fairness pertain to topic arguments that are \emph{comparable}.
    However, this may not always be the case in QBAGs, as arguments may have \emph{undefined} final strengths, practically, because a modular semantics may not converge for cyclic graphs.
    Pragmatically, one may claim that undefined final strengths give rise to an edge case whose consideration introduces overhead and thus confusion.
    However, in some cases, dealing with arguments of incomparable strengths can be useful. This is useful, for instance, when moving from quantitative to labelling-based abstract argumentation, where labels may form a lattice of argument strengths (cf.~\cite{wu2010labelling}), as sketched in~\cite{KAMPIK-strength-change}. 
    \newpage
    
    \noindent\textbf{Introducing SOLF notions to other argumentation variants.}
    The aforementioned bridge to abstract argumentation, via labelling-based approaches that give rise to notions akin to final strengths, can enable the introduction of our SOLF notions to other variants of computational argumentation.
    In the context of abstract argumentation, our notions could then be integrated with a wealth of established results, e.g., regarding \emph{stability}, \emph{robustness}, and other invariant notions~\cite{Rienstra.al-2020-Robustness,Bistarelli.Santini.Taticchi-2018-InvariantOperators,Baumann-2012-NomalStrongExpansion,Odekerken.Borg.Bex-2022-StabilityAndRelevance}.
    Conversely, one could extend the scope of these existing notions to a quantitative setting.

    \subsubsection*{Acknowledgements}
    This work was partially supported by the Wallenberg AI, Autonomous Systems and Software Program (WASP) funded by the Knut and Alice Wallenberg Foundation.
    \newpage
    
    \bibliographystyle{src/main/latex/main/LNGAI}
    \bibliography{src/main/latex/main/main.bib}

\end{document}